\titleformat{\section}{\Large\bfseries}{\thesection .}{0.4em}{}
\titleformat{\subsection}{\large\bfseries}{\thesubsection .}{0.4em}{}
\numberwithin{equation}{section}
\theoremstyle{plain}
\newtheorem{theorem}{Theorem}
\newtheorem{lemma}{Lemma}
\begin{document}

 
\title{Inductive Approach to Loop Vertex Expansion}  
\author{Fang-Jie Zhao}  
\date{}      
\maketitle 

\begin{abstract}

{\noindent
An inductive realization of Loop Vertex Expansion is proposed and is applied to  
the construction of the $\phi_1^4$ theory. It appears simpler and more natural than the standard one
at least for some  situations.
}
\end{abstract}
\section{Introduction}

Loop Vertex Expansion (LVE) \cite{R07,R10,RW14} is a new constructive method for bosonic field theories, based on the
auxiliary field representation and BKAR forest formula \cite{BK87,AR95}. Contrary to the traditional techniques such as
cluster expansion and Mayer expansion, it needs no space-time discretization at all and thus greatly simplifies
the constructions.

However, the standard LVE requires some modifications to adapt to different situations \cite{RW10,GR14,RW15,LR16}. In \cite{MR08},
to extract the exponential decay of connected Schwinger functions, one performs some expansion on the resolvent and integration by parts with respect to the auxiliary field,  which destroy the simplicity of LVE to a certain extent. 
Although an extra large/small field expansion \cite{AR97} can be used to replace the integration by parts, as point out in \cite{MR08}, it unfortunately reintroduces some discretization of space-time and appears rather heavy. 
 
In this paper, we provide an inductive realization of LVE, which also relies on the use of the auxiliary field. 
However, instead of explicitly applying the BKAR forest formula, 
we derive a functional integral equation for the theory with the auxiliary field, 
in which the tree structure and the combinatorial core are hidden. 
After a proper decomposition of the physical quantities, 
the functional integral equation turns into a series of normal equations and then can be solved inductively.
One of the advantages of our method is that some finer tree structures can be obtained naturally from further decompositions of the physical quantities. 
Also we give a pointwise estimate for the resolvent instead of a norm one and then totally avoid the resolvent expansion.
These two points seemly lead to a simpler proof of the exponential decay of connected Schwinger functions.
As an example, we construct two point Schwinger function of the one dimensional $\phi^4$ theory in this new way 
and leave the 2n-point functions with $n\!>\!1$ to a future paper. 
We also consider the pressure of the theory in order to check the flexibility of this method.

\section{Description}
\noindent
Let us consider the one-dimensional $\phi^4$ theory ($\phi^4_1$).
For mathematical rigor, we at first restrict the interacting party of the theory to the finite interval $[-L,L]$ and then take the limit as $L\!\to\! \infty$ in some way.
The generating functional of the restricted $\phi^4_1$ theory is
\begin{equation}
Z_{L}[J]=\int\! d\mu_C(\phi)\, e^{-\lambda\!\int_{-L}^{L}\!dx\,\phi_x^4+\int\!dx\, J_x\phi_x}
\end{equation}
with the source $J$ in the  Schwartz space $\mathcal{S}(\mathbb{R})$, where
\begin{equation}
C_{x,y}=((-\Delta\!+\!M^2)^{-1})_{x,y}=\frac{1}{2M} e^{-M|x-y|}
\end{equation}
and $d\mu_C$ is the Gaussian measure on the space of tempered distributions $\mathcal{S}'(\mathbb{R})$ with covariance $C$. 
Introducing an auxiliary source $\sigma$ and a parameter $t\!\in\![0,1]$ into the theory, we have
\begin{equation}
Z_{t,L}[J,\sigma]=\int\! d\mu_C(\phi)\, e^{-t\lambda\!\int_{-L}^L\!dx\,\phi_x^4+\int\!dx\, J_x\phi_x-i\sqrt{2\lambda}\!\int\!dx\, \sigma_x\phi_x^2},
\end{equation}
which reduces to $Z_L[J]$ when $t\!=\!1$ and $\sigma\!=\!0$. 
Also we have
\begin{align}\label{partialZ}
\frac{\partial}{\partial t} Z_{t,L}[J,\sigma]&=\!\int\! d\mu_C(\phi)\, e^{-t\lambda\!\int_{-L}^L\!dx\,\phi_x^4+\int\!dx\, J_x\phi_x-i\sqrt{2\lambda}\!\int\!dx\, \sigma_x\phi_x^2}\notag\\
&\times(-\lambda)\!\int_{-L}^L\!dx\,\phi_x^4=\frac12\int_{-L}^L\!dx\,\frac{\delta^2}{\delta\sigma_x^2}Z_{t,L}[J,\sigma]
\end{align}
with the initial condition
\begin{align}
Z_{0,L}[J,\sigma]&=\int\! d\mu_C(\phi)\, e^{\int\!dx\, J_x\phi_x-i\sqrt{2\lambda}\!\int\!dx\, \sigma_x\phi_x^2}\notag\\
&=e^{\frac12 \!\int\!dx\!\int\!dy\, J_x G[{\sigma}]_{x,y}J_y-\frac12\mathrm{Tr}\ln(I+2i\sqrt{2\lambda}A[{\sigma}])},
\end{align}
where $A[{\sigma}]_{x,y}\!=\!\int\! dz\, C^{1/2}_{x,z}\sigma_z C^{1/2}_{z,y}$ and $G[{\sigma}]\!=\!(-\Delta\!+\!M^2\!+\!2i\sqrt{2\lambda}\sigma)^{-1}$. It is easy to see that $G[{\sigma}]$ remains valid  for $|\mathrm{Arg}\, M^2| \!<\! \frac{\pi}{2}$. 

Denoting 
$W_{t,L}[J,\sigma]\!=\!\ln Z_{t,L}[J,\sigma]$, 
we obtain a Polchinski-type equation 
\begin{align}
\frac{\partial}{\partial t} W_{t,L}[J,\sigma]=\frac12\int_{-L}^L\!dx\,\frac{\delta^2}{\delta\sigma_x^2}W_{t,L}[J,\sigma]+\frac12\int_{-L}^L\!dx\,\Big(\frac{\delta}{\delta\sigma_x}W_{t,L}[J,\sigma]\Big)^2
\end{align}
with the initial condition
\begin{align}\label{InitialCondition}
W_{0,L}[J,\sigma]=\frac12 \!\int_{\mathbb{R}}\!dx\!\int_{\mathbb{R}}\!dy\, J_x G[{\sigma}]_{x,y}J_y-\frac12\mathrm{Tr}\ln(I\!+\!2i\sqrt{2\lambda}A[{\sigma}]).
\end{align}
(Actually, $W_{0,L}[J,\sigma]$ is independent of $L$.)
As pointed out in \cite{BK87,BF93}, this equation can be rewritten as a functional integral equation
\begin{align}\label{IntegralEquation}
&W_{t,L}[J,\sigma]=\int\!d\nu_{t,L}(\bar{\sigma})\,W_{0,L}[J,\sigma\!+\!\bar{\sigma}]\,+\notag\\
&\frac12\!\int_{0}^t\!ds\!\int_{-L}^L\!dx\!\int\!d\nu_{t-s,L}(\bar{\sigma})\,\Big(\frac{\delta}{\delta\bar{\sigma}_x}W_{s,L}[J,\sigma\!+\!\bar{\sigma}]\Big)^2,
\end{align}
where $d\nu_{t,L}$ is the Gaussian measure on $\mathcal{S}'(\mathbb{R})$ with covariance $t\delta(x\!-\!y)\chi_{[-L,L]}(x)$ for $t\!\ge\! 0$.
Let $W^{(0)}_{t,L}[J,\sigma]=\int\!d\nu_{t,L}(\bar{\sigma})\,W_{0,L}[J,\sigma\!+\!\bar{\sigma}]$ and for $n\!\ge\!1$
\begin{align}
W^{(n)}_{t,L}[J,\sigma]=\frac12\sum_{h=0}^{n-1}\int_{0}^t\!ds\!
\int_{-L}^L\!dx\!\int\!d\nu_{t-s,L}(\bar{\sigma})\notag\\\frac{\delta}{\delta\bar{\sigma}_x}W^{(h)}_{s,L}[J,\sigma\!+\!\bar{\sigma}]\frac{\delta}{\delta\bar{\sigma}_x}W^{(n-h-1)}_{s,L}[J,\sigma\!+\!\bar{\sigma}].
\end{align}
It is obvious that $W_{t,L}[J,\sigma]\!=\!\sum_{n\ge 0}W^{(n)}_{t,L}[J,\sigma]$ satisfies \eqref{IntegralEquation} formally. 

We now go to the full $\phi_1^4$ theory. Let $d\nu_{t}$ be the Gaussian measure on $\mathcal{S}'(\mathbb{R})$ with covariance $t\delta(x\!-\!y)$ for $t\!\ge\! 0$, which is the formal limit of $d\nu_{t,L}$ as $L\!\to\! \infty$. Let $\nu_{t}\!*\!F$ be an abbreviation for $\int\!d\nu_{t}(\bar{\sigma})\,F[\cdot\!+\!\bar{\sigma}]$.
Denote
\begin{align}
W^{(n)}_{t,y_1,\dots,y_M;z_1,\dots,z_N}[\sigma]&=\lim_{L\to\infty}\prod_{j=1}^M\frac{\delta}{\delta J_{y_j}}\prod_{k=1}^N\frac{\delta}{\delta \sigma_{z_k}} W^{(n)}_{t,L}[J,\sigma]\Big|_{J=0}
\end{align}
for $M\!+\!N\!>\!0$, if the limit exists. 
Then the Schwinger functions of the full theory can be decomposed into $\sum_{n\ge 0}W^{(n)}_{1,y_1,\dots,y_M}[0]$ at least formally and, by the $\mathbb{Z}_2$ symmetry of the $\phi$-interactions, they vanish for $M$ odd.
Here we consider the case $M\!=\!2$ (i.e. the two pint function) only for simplicity and postpone the cases $M\!=\!4,6,\dots$ to a future paper. 
For $n\!=\!0$, direct calculation shows 
\begin{align}
W^{(0)}_{t;z}[\sigma]=-i\sqrt{2\lambda}\,(\nu_{t}\!*\!G_{z,z})[\sigma],\;
W^{(0)}_{t,y_1,y_2}[\sigma]=(\nu_{t}\!*\!G_{y_1,y_2})[\sigma]
\end{align}
and then, in generally,
\begin{align}
&W^{(0)}_{t;z_0,z_1,\dots,z_N}[\sigma]=\frac12(-2i\sqrt{2\lambda})^{N+1}\sum_{\tau\in S_{N}}\notag\\
&\big(\nu_{t}\!*\!\big(G_{z_0,z_{\tau(1)}}G_{z_{\tau(1)},z_{\tau(2)}}\!\cdots G_{z_{\tau(N)},z_0}\big)\big)[\sigma],\\
&W^{(0)}_{t,y_1,y_2;z_1,\dots,z_N}[\sigma]=(-2i\sqrt{2\lambda})^{N}\sum_{\tau\in S_{N}}\notag\\
&\big(\nu_{t}\!*\!\big(G_{y_1,z_{\tau(1)}}G_{z_{\tau(1)},z_{\tau(2)}}\!\cdots G_{z_{\tau(N)},y_2}\big)\big)[\sigma],
\end{align}
where $S_{N}$ is the set of all permutations of $\{1,\dots,N\}$.
For $n\!\ge\!1$, we have, at least formally,
\begin{align}
&W^{(n)}_{t;z}[\sigma]=\lim_{L\to\infty}\sum_{h=0}^{n-1}\int_{0}^t\!ds\!\int_{-L}^L\!dx\!\int\!\!d\nu_{t-s,L}(\bar{\sigma})\notag\\
&\frac{\delta}{\delta\bar{\sigma}_x}W^{(h)}_{s,L}[0,\sigma\!+\!\bar{\sigma}]\frac{\delta}{\delta\bar{\sigma}_z}\frac{\delta}{\delta\bar{\sigma}_x}W^{(n-h-1)}_{s,L}[0,\sigma\!+\!\bar{\sigma}]\notag\\
&=\sum_{h=0}^{n-1}\int_{0}^t\!ds\!\int_{\mathbb{R}}\!dx\,\big(\nu_{t-s}\!*\!\big(W^{(h)}_{s;x}W^{(n-h-1)}_{s;z,x}\big)\big)[\sigma]
\end{align}
and similarly have
\begin{align}
W^{(n)}_{t,y_1,y_2}[\sigma]=\sum_{h=0}^{n-1}\int_{0}^t\!ds\!
\int_{\mathbb{R}}\!dx\,
\big(\nu_{t-s}\!*\!\big(W^{(h)}_{s;x}W^{(n-h-1)}_{s,y_1,y_2;x}\big)\big)[\sigma].
\end{align}
Also, in generally, we get
\begin{align}
&W^{(n)}_{t;z_0,z_1,\dots,z_N}[\sigma]
=\sum_{h=0}^{n-1}\sum_{K\subset \widehat{N}}\int_{0}^t\!ds\!\int_{\mathbb{R}}\!dx\notag\\
&\Big(\nu_{t-s}\!*\!\Big(W^{(h)}_{s;x,(z_k)_{k\in K}}W^{(n-h-1)}_{s;z_0,x,(z_k)_{k\in \widehat{N}\backslash K}}\Big)\Big)[\sigma]
\end{align}
and
\begin{align}
&W^{(n)}_{t,y_1,y_2;z_1,\dots,z_N}[\sigma]=\sum_{h=0}^{n-1}\sum_{K\subset \widehat{N}}\int_{0}^t\!ds\!
\int_{\mathbb{R}}\!dx\notag\\
&\Big(\nu_{t-s}\!*\!\Big(W^{(h)}_{s;x,(z_k)_{k\in K}}W^{(n-h-1)}_{s,y_1,y_2;x,(z_k)_{k\in \widehat{N}\backslash K}}\Big)\Big)[\sigma],
\end{align}
where $\widehat{N}$ is an abbreviation for $\{1,2,\dots,N\}$.

\begin{theorem}\label{theorem1}
For  $|\mathrm{Arg}\, M^2| \!<\! \frac{\pi}{2}$ and $0\!\le\!\lambda\!\le\!\frac18 (\mathrm{Re}\,M^2)^{3/2}$,
\begin{align}
\sum_{n\ge 0}\big|W^{(n)}_{1,y_1,y_2}[0]\big|\le
\frac{1}{2(c\,\mathrm{Re}\,M^2)^{1/2}}e^{-(c\,\mathrm{Re}\,M^2)^{1/2}|y_1-y_2|}
\end{align}
with $c\!=\!\frac12 \big(1\!+\!\sqrt{1-8\lambda(\mathrm{Re}\,M^2)^{-3/2}}\,\big)$. 
\end{theorem}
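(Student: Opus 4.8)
The plan rests on one analytic input — a pointwise resolvent bound uniform in the auxiliary field — and one piece of combinatorial bookkeeping. \emph{First}, write $G[\sigma]=\int_0^\infty e^{-u(-\Delta+M^2+2i\sqrt{2\lambda}\sigma)}\,du$. For real $\sigma$ the potential $M^2+2i\sqrt{2\lambda}\sigma_x$ has constant real part $\mathrm{Re}\,M^2$, so by the Feynman--Kac representation of the heat semigroup (applied to a mollification of $\sigma$ and passing to the limit; the imaginary part of the potential contributes only a unimodular phase, so no integrability problem arises) the kernel of $e^{-u(-\Delta+M^2+2i\sqrt{2\lambda}\sigma)}$ is dominated in modulus by that of $e^{-u(-\Delta+\mathrm{Re}\,M^2)}$. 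Carrying out the $u$-integral with $\int_0^\infty(4\pi u)^{-1/2}e^{-\mu u-r^2/(4u)}\,du=(2\sqrt\mu)^{-1}e^{-\sqrt\mu\,r}$ gives
\[
 \bigl|G[\sigma]_{x,y}\bigr|\;\le\;\frac{1}{2m_0}\,e^{-m_0|x-y|},\qquad m_0:=(\mathrm{Re}\,M^2)^{1/2},
\]
for every real $\sigma$ and every $M^2$ with $|\mathrm{Arg}\,M^2|<\tfrac\pi2$. Since each $d\nu_t$ and $d\nu_{t,L}$ is a probability measure and this bound is uniform in $\sigma$, every convolution $\nu_t*(\cdot)$ appearing in the recursions costs nothing in such estimates, so it is enough to bound the integrands pointwise in the internal fields.

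\emph{Second}, I would run a simultaneous induction on $n$ over the whole family, positing for $t\le1$ and all real $\sigma$ bounds of the shape
\[
 \bigl|W^{(n)}_{t,y_1,y_2;z_1,\dots,z_N}[\sigma]\bigr|\;\le\;B_{n,N}\,\Phi(y_1,y_2;z_1,\dots,z_N),
\]
and similarly for $W^{(n)}_{t;z_0,\dots,z_N}$, where the $B_{n,N}\ge0$ are constants and $\Phi$ is a product over the edges of a tree connecting the marked points of factors $e^{-m_0|\cdot|}$ on the outermost propagators and $e^{-m|\cdot|}$ on the interior ones, with $m=\sqrt c\,m_0\in(0,m_0)$ a reduced rate to be fixed. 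Keeping the outer rate at $m_0$ is precisely what lets every spatial integration $\int dx$ created by the recursion close without loss, since $\int dx\,e^{-m_0|a-x|}e^{-m|x-b|}\le 3(m_0-m)^{-1}e^{-m|a-b|}$ produces no spurious polynomial factor, whereas a convolution of two equal-rate exponentials would. The cases $n=0$ follow from the explicit formulas for $W^{(0)}$ and the bound above; the inductive step inserts the Ans\"atze into the recursions, performs these convolutions, bounds the nested $\int_0^t ds$ integrals, and collapses the finite sums $\sum_h$, $\sum_{K\subset\widehat{N}}$ and the permutation sums hidden at the $n=0$ vertices. The factorials from those permutation sums must be matched against the number of underlying trees (Cayley/Pr\"ufer) and the volume of the time simplex, as in standard LVE bookkeeping.

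\emph{Third}, carrying this through, the generating series $\sum_n B_{n,0}$ obeys a closed quadratic relation which in terms of $m$ reads $m^2(m_0^2-m^2)=2\lambda m_0$, i.e. $c(1-c)=2\lambda\,(\mathrm{Re}\,M^2)^{-3/2}$ with $c=m^2/m_0^2$. The left side, over $m\in(0,m_0)$, is maximal (value $m_0^3/4$) at $m^2=m_0^2/2$, so the series converge exactly when $2\lambda\le m_0^3/4$, i.e. $\lambda\le\tfrac18(\mathrm{Re}\,M^2)^{3/2}$, and for such $\lambda$ one keeps the root retaining the most decay, $c=\tfrac12\bigl(1+\sqrt{1-8\lambda(\mathrm{Re}\,M^2)^{-3/2}}\bigr)$. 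Specializing the two-point bound to $t=1$, $N=0$ and summing over $n$ leaves a single free propagator at rate $m$, giving $\sum_{n\ge0}\bigl|W^{(n)}_{1,y_1,y_2}[0]\bigr|\le\frac{1}{2m}e^{-m|y_1-y_2|}$, which is the claim.

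\emph{Expected difficulty.} The hard part is the second and third steps: one must simultaneously keep track of which propagators are charged at $m_0$ versus $m$ so that each $\int dx$ is a clean convolution, control the permutation-sum factorials against the tree count and the time-simplex volume, and feed all of this through the recursion so that the coefficient recursion closes into \emph{exactly} the quadratic $m^2(m_0^2-m^2)=2\lambda m_0$ rather than a cruder inequality --- it is this sharpness that produces the precise threshold $\lambda\le\tfrac18(\mathrm{Re}\,M^2)^{3/2}$ and the exact constant $c$. A minor technical point, justifying the Feynman--Kac domination for $\sigma$ in the support of white noise, is routine by mollification because $\mathrm{Re}\,V$ is constant.
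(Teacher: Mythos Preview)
Your first step --- the Feynman--Kac/Wiener domination giving $|G[\sigma]_{x,y}|\le C'_{x,y}:=\frac{1}{2m_0}e^{-m_0|x-y|}$ uniformly in real $\sigma$ --- is exactly the paper's Lemma~1, and your observation that the $\nu_t$-convolutions then cost nothing is correct and is used throughout.

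The gap is in your inductive step. The ansatz of a ``tree with outermost propagators at $m_0$ and interior ones at a reduced rate $m$'' is not specified precisely enough to close the recursion: already for $W^{(0)}_{t,y_1,y_2;z_1,\dots,z_N}$ every propagator on the $y_1$--$y_2$ chain sits at rate $m_0$, and the recursion keeps inserting new vertices into that chain, so there is no canonical place to switch rates. Moreover the convolution bound you quote, $\int e^{-m_0|a-x|}e^{-m|x-b|}\,dx\le 3(m_0-m)^{-1}e^{-m|a-b|}$, is not sharp (the exact factor is $2m_0/(m_0^2-m^2)$); pushing a non-sharp constant through the induction would yield a cruder inequality, not the exact quadratic $c(1-c)=2\lambda(\mathrm{Re}\,M^2)^{-3/2}$ that fixes both the threshold $\tfrac18(\mathrm{Re}\,M^2)^{3/2}$ and the constant $c$. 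You correctly flag this as the hard part but do not resolve it.

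The paper resolves it by a different bookkeeping that avoids mixed rates altogether. Rather than inserting a reduced mass into the inductive hypothesis, it introduces a second decomposition $W^{(n)}_{t,y_1,y_2;\cdot}=\sum_{m} W^{(n,m)}_{t,y_1,y_2;\cdot\,;\cdot}$ indexed by the number $m$ of inductively generated vertices that land on the $y_1$--$y_2$ chain rather than on the attached vacuum loops (Lemma~5). Each piece is then bounded, after integrating the auxiliary $z,z'$ variables, by a clean operator power at the \emph{original} mass: $|W^{(n,m)}_{t,y_1,y_2}[\sigma]|\le(2\lambda t)^n(\mathrm{Re}\,M^2)^{-\frac32 n+m}B_{n,m}\,(C'^{\,m+1})_{y_1,y_2}$ (Lemma~6), where the $B_{n,m}$ are Catalan-type numbers with generating function $\sum_n B_{n,m}x^n=\bigl(\tfrac12(1-\sqrt{1-4x})\bigr)^m$ (Lemma~3). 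Summing first over $n$ gives $\varepsilon^m$ with $\varepsilon=\tfrac12\bigl(1-\sqrt{1-8\lambda(\mathrm{Re}\,M^2)^{-3/2}}\,\bigr)$, and then the sum over $m$ is an exact geometric series in momentum space,
\[
\sum_{m\ge0}(\varepsilon\,\mathrm{Re}\,M^2)^m\,(p^2+\mathrm{Re}\,M^2)^{-m-1}=(p^2+(1-\varepsilon)\,\mathrm{Re}\,M^2)^{-1},
\]
which is what manufactures the reduced mass $\sqrt{c}\,m_0$ with $c=1-\varepsilon$ exactly. The sharp identities that make the induction close on the nose --- the Vandermonde-type subset sum of Lemma~2 and the recursion $B_{n,m}=\frac1n\sum_h A_h\bigl(mB_{n-h-1,m-1}+(2n{-}2h{-}m{-}2)B_{n-h-1,m}\bigr)$ of Lemma~3 --- are precisely the missing ingredients in your outline.
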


This means that the two point function of the full theory has the absolutely convergent expansion  
$\sum_{n\ge 0}W^{(n)}_{1,y_1,y_2}[0]$ and exhibits exponential decay in $|y_1\!-\!y_2|$.
Also, the bound we give here is more explicit than those in previous works (such as \cite{MR08}).

We are also interested in the pressure of the full $\phi_1^4$ theory, which is defined as $\lim_{L\to\infty}\frac{1}{2L}W_{1,L}[0,0]$.
Since $W_{0}[0,0] \!=\! 0$, we have
\begin{align}
\int\!d\nu_{1,L}(\sigma)\,W_{0}[0,\sigma] = -\int_{0}^1\!dt\, \frac{\partial}{\partial t} \!\int\!d\nu_{1-t,L}(\sigma)\,W_{0}[0,\sigma]\notag\\
=\frac12\!\int_{-L}^L\!dx\!\int_{0}^1\!dt\!\int\!d\nu_{1-t,L}(\sigma)\,\frac{\delta^2}{\delta\sigma_x^2}W_{0}[0,\sigma]
\end{align}
and, by \eqref{IntegralEquation},
\begin{align}
W_{1,L}[0,0]=
\frac12\!\int_{-L}^L\!dx\!\int_{0}^1\!dt\!\int\!d\nu_{1-t,L}(\sigma)\,\Big\{\frac{\delta^2}{\delta\sigma_x^2}W_{0}[0,\sigma]+\Big(\frac{\delta}{\delta\sigma_x}W_{t,L}[0,\sigma]\Big)^2\Big\}.
\end{align}
Then, by translation invariance, the pressure of the full theory can be expressed as
\begin{align}\label{Pressure}
&\frac12\lim_{L\to\infty}\!\int_0^1\!\! dt \int\!d\nu_{1-t,L}(\sigma)\,\Big\{\frac{\delta^2}{\delta\sigma_x^2}W_0[0,\sigma]+\Big(\frac{\delta}{\delta\sigma_x}W_{t,L}[0,\sigma]\Big)^2 \Big\}\notag\\
&\!=\frac12\int_0^1\!\! dt\int\!d\nu_{1-t}(\sigma)\,  \Big\{\!-4\lambda (G[\sigma]_{x,x})^2+\Big(\sum_{n\ge 0}W^{(n)}_{t;x}[\sigma]\Big)^2 \Big\}
\end{align}
at least formally, which is independent of $x$.

\begin{theorem}\label{theorem2}
For  $|\mathrm{Arg}\, M^2| \!<\! \frac{\pi}{2}$ and $0\!\le\!\lambda \!\le\!\frac18 (\mathrm{Re}\,M^2)^{3/2}$,
the expression \eqref{Pressure} for the pressure is absolutely convergent and is bounded by
\begin{align}
\frac{\lambda}{\mathrm{Re}\,M^2}\Big\{\frac12+\int_0^1\!\! dt\,\Big({1+\sqrt{1\!-\!8\lambda t(\mathrm{Re}\,M^2)^{-3/2}}}\,\Big)^{-2}\Big\}.
\end{align}
\end{theorem}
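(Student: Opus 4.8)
The plan is to bound the two ingredients in the integrand of \eqref{Pressure} separately, using the same pointwise resolvent estimate that drives Theorem~\ref{theorem1}. The key input (which I expect is established in the course of proving Theorem~\ref{theorem1}) is a pointwise bound of the form $|G[\sigma]_{x,y}|\le \frac{1}{2(c\,\mathrm{Re}\,M^2)^{1/2}}e^{-(c\,\mathrm{Re}\,M^2)^{1/2}|x-y|}$, valid $d\nu_{t}$-almost everywhere in $\sigma$ for the relevant range of $\lambda$, together with its companion bounds for $\nu_t\!*\!(\cdots)$. First I would handle the explicit term: $\big|{-4\lambda(G[\sigma]_{x,x})^2}\big|\le 4\lambda\cdot\frac{1}{4c\,\mathrm{Re}\,M^2}=\frac{\lambda}{c\,\mathrm{Re}\,M^2}$. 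Integrating against the probability measure $d\nu_{1-t}$ and over $t\in[0,1]$ contributes, after the factor $\tfrac12$ in \eqref{Pressure}, the term $\frac{\lambda}{2\,\mathrm{Re}\,M^2}\int_0^1 c^{-1}dt$; since $c\ge\tfrac12(1+\sqrt{1-8\lambda(\mathrm{Re}\,M^2)^{-3/2}})\ge\tfrac12$, this is at most $\frac{\lambda}{\mathrm{Re}\,M^2}$, but more carefully one wants to keep the $t$-dependence to match the stated bound, so I would instead use the $t$-dependent constant $c_t=\tfrac12(1+\sqrt{1-8\lambda t(\mathrm{Re}\,M^2)^{-3/2}})$ coming from the sharper estimate at "time" $t$. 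That is where the two displayed summands in the theorem originate: the $\tfrac12$ from the $t$-independent bound $c^{-1}\ge$ (evaluated cautiously it gives the clean $\tfrac12$), and the integral term from the squared $W$-series.

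Next I would bound the second term $\big(\sum_{n\ge0}W^{(n)}_{t;x}[\sigma]\big)^2$. The idea is to derive, in parallel with Theorem~\ref{theorem1}, a bound $\sum_{n\ge0}|W^{(n)}_{t;x}[\sigma]|\le \sqrt{2\lambda}\cdot\frac{1}{2 c_t\,\mathrm{Re}\,M^2}$ (the $\sqrt{2\lambda}$ and the power of $\mathrm{Re}\,M^2$ being dictated by $W^{(0)}_{t;z}=-i\sqrt{2\lambda}(\nu_t\!*\!G_{z,z})$ and the recursion). Squaring gives $\frac{2\lambda}{4c_t^2(\mathrm{Re}\,M^2)^2}=\frac{\lambda}{2c_t^2(\mathrm{Re}\,M^2)^2}$; then $\tfrac12\int_0^1 dt$ of this, against the probability measure, yields $\frac{\lambda}{4(\mathrm{Re}\,M^2)^2}\int_0^1 c_t^{-2}dt$. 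To match the stated bound one uses $\mathrm{Re}\,M^2\ge 1$? No — rather one rescales: writing $2c_t=1+\sqrt{1-8\lambda t(\mathrm{Re}\,M^2)^{-3/2}}$ so that $c_t^{-2}=4(1+\sqrt{1-8\lambda t(\mathrm{Re}\,M^2)^{-3/2}})^{-2}$, the term becomes $\frac{\lambda}{(\mathrm{Re}\,M^2)^2}\int_0^1(1+\sqrt{\cdots})^{-2}dt$, and I would absorb one factor of $\mathrm{Re}\,M^2$ — here I must recheck the homogeneity, since the final bound has $\frac{\lambda}{\mathrm{Re}\,M^2}$ out front, suggesting the correct per-$t$ estimate for $\sum|W^{(n)}_{t;x}|$ carries an extra $(\mathrm{Re}\,M^2)^{1/4}$ relative to my first guess; the clean way is to track dimensions via the substitution $\lambda\mapsto\lambda(\mathrm{Re}\,M^2)^{-3/2}$, $M^2\mapsto 1$, under which both $C_{x,x}=\frac{1}{2M}$ and the whole construction scale covariantly.

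The main obstacle, and the step I would spend the most care on, is establishing the uniform-in-$\sigma$, summable bound $\sum_{n\ge0}|W^{(n)}_{t;x}[\sigma]|\le(\text{const})\cdot c_t^{-1}$ with the correct $t$-dependent constant $c_t$ rather than the $t=1$ constant $c$. This requires running the inductive estimate behind Theorem~\ref{theorem1} but keeping the "time" parameter explicit: at stage $n$ the recursion $W^{(n)}_{t;z}=\sum_{h}\int_0^t ds\int dx\,\nu_{t-s}\!*\!(W^{(h)}_{s;x}W^{(n-h-1)}_{s;z,x})$ propagates a factor that, after the $s$-integration and the $x$-integration (which contributes a factor $\sim (c_s\,\mathrm{Re}\,M^2)^{-1/2}$ from integrating the exponential decay of $G$), must close into a bound governed by the generating-function identity for the Catalan-type numbers — precisely the fixed-point equation $c=1+2\lambda(\mathrm{Re}\,M^2)^{-3/2}c^{-1}\cdot(\text{something})$, whose $t$-dependent version is the stated $c_t$. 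Once that is in hand, Theorem~\ref{theorem2} follows by the two estimates above plus Fubini (justified by the absolute convergence just proved) to interchange $\sum_n$, $\int d\nu_{1-t}$, and $\int_0^1 dt$, and finally the $L\to\infty$ limit is controlled because all bounds are uniform in $L$ and $x$.
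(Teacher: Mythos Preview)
Your approach matches the paper's: bound the two pieces of \eqref{Pressure} separately, using the pointwise resolvent estimate for the $-4\lambda G_{x,x}^2$ term and a summed Catalan-type bound on $\sum_n|W^{(n)}_{t;x}|$ for the squared term. The one conceptual slip is that you insert the constant $c$ (or $c_t$) into the resolvent bound itself; in the paper the pointwise estimate (Lemma~\ref{lemma1}) is simply $|G[\sigma]_{x,y}|\le C'_{x,y}=\tfrac{1}{2(\mathrm{Re}\,M^2)^{1/2}}e^{-(\mathrm{Re}\,M^2)^{1/2}|x-y|}$ with \emph{no} dependence on $\lambda$, which immediately gives the clean $\tfrac12$ term. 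The constant $c_t$ enters only afterwards, when one sums the inductive bound $|W^{(n)}_{t;x}[\sigma]|\le\tfrac{t^n}{4^{n+1}}(2\sqrt{2\lambda})^{2n+1}(\mathrm{Re}\,M^2)^{-3n/2-1/2}A_n$ of Lemma~\ref{lemma4} via the Catalan generating function $\sum_n A_n x^n=2/(1+\sqrt{1-4x})$ (Lemma~\ref{lemma3}); this also resolves your homogeneity worry, yielding $\sum_n|W^{(n)}_{t;x}[\sigma]|\le\big(\lambda/(2\,\mathrm{Re}\,M^2)\big)^{1/2}\cdot 2\big(1+\sqrt{1-8\lambda t(\mathrm{Re}\,M^2)^{-3/2}}\big)^{-1}$, whose square produces exactly the integral term in the stated bound.
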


\section{Proofs of Theorem 1 and 2}

First we must ensure the well-definedness of $G[\sigma]$ in some sense, which is one of the standpoints of our derivations.
Moverover we need a good bound for it.

\begin{lemma}\label{lemma1}
For $|\mathrm{Arg}\, M^2| \!<\! \frac{\pi}{2}$ and $\lambda\!\ge\! 0$, $G[\sigma]_{x,y}$ is well defined $d\nu_t$-a.e with 
\begin{align}
|G[\sigma]_{x,y}|\le C'_{x,y}:= \frac{1}{2(\mathrm{Re}\,M^2)^{1/2}} e^{-(\mathrm{Re}\,M^2)^{1/2}|x-y|}.
\end{align}
\end{lemma}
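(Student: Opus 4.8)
\emph{Plan.} The statement has two parts: giving $G[\sigma]_{x,y}$ a meaning for $d\nu_t$-almost every $\sigma$, and the pointwise bound by $C'_{x,y}$. For the first part I would write $H[\sigma]:=-\Delta+M^2+2i\sqrt{2\lambda}\,\sigma=H_0+iV$ with $H_0:=-\Delta+\mathrm{Re}\,M^2$ (self-adjoint, $\ge\mathrm{Re}\,M^2>0$) and $V:=\mathrm{Im}\,M^2+2\sqrt{2\lambda}\,\sigma$ (symmetric, since $\sigma$ is real). In one dimension $\widetilde V:=H_0^{-1/2}VH_0^{-1/2}$ is, once $\sigma$ is restricted to a bounded interval, $d\nu_{t}$-a.e.\ a bounded self-adjoint operator — the usual one-dimensional circumstance: $H_0^{-1/2}$ has an $L^2$ kernel ($\propto K_0$) and $\mathbb{E}\,\|H_0^{-1/2}\sigma H_0^{-1/2}\|_{\mathrm{HS}}^2$ is finite on a bounded interval, exactly as in the discussion of $A[\sigma]$ above. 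Hence $H[\sigma]=H_0^{1/2}(I+i\widetilde V)H_0^{1/2}$ is an m-sectorial operator with bounded inverse,
\[
G[\sigma]=H_0^{-1/2}(I+i\widetilde V)^{-1}H_0^{-1/2},\qquad \|(I+i\widetilde V)^{-1}\|\le 1,
\]
so $G[\sigma]_{x,y}=\langle H_0^{-1/2}\delta_x,(I+i\widetilde V)^{-1}H_0^{-1/2}\delta_y\rangle$ is a well-defined number, and already $|G[\sigma]_{x,y}|\le\big((H_0^{-1})_{x,x}(H_0^{-1})_{y,y}\big)^{1/2}=\tfrac12(\mathrm{Re}\,M^2)^{-1/2}$ by Cauchy--Schwarz. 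This is $C'_{x,y}$ stripped of its exponential factor.

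\emph{The exponential decay.} The point I would exploit is that the interaction enters through an imaginary coupling: since $\sigma$ is real-valued, $\langle\psi,\sigma\psi\rangle=\int\sigma_x|\psi_x|^2\,dx$ is real for every $\psi$, so $2i\sqrt{2\lambda}\langle\psi,\sigma\psi\rangle$ is purely imaginary and
\[
\mathrm{Re}\,\langle\psi,H[\sigma]\psi\rangle=\|\psi'\|^2+(\mathrm{Re}\,M^2)\|\psi\|^2\ \ge\ (\mathrm{Re}\,M^2)\|\psi\|^2 .
\]
Thus $\|e^{-sH[\sigma]}\|\le e^{-s\,\mathrm{Re}\,M^2}$ and $G[\sigma]=\int_0^\infty e^{-sH[\sigma]}\,ds$. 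I would then bound the modulus of the heat kernel by that of $-\Delta+\mathrm{Re}\,M^2$ via the Trotter product formula
\[
e^{-sH[\sigma]}=\lim_{n\to\infty}\big(e^{-\frac sn(-\Delta+\mathrm{Re}\,M^2)}\,e^{-\frac sn\,i(\mathrm{Im}\,M^2+2\sqrt{2\lambda}\sigma)}\big)^n ,
\]
in which the first factor has a strictly positive kernel (the heat kernel of $-\Delta+\mathrm{Re}\,M^2$) and the second is a unimodular multiplication operator; taking absolute values inside the $n$-fold convolution gives $|(e^{-sH[\sigma]})_{x,y}|\le(e^{-s(-\Delta+\mathrm{Re}\,M^2)})_{x,y}$. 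Integrating in $s$,
\[
|G[\sigma]_{x,y}|\le\int_0^\infty(e^{-s(-\Delta+\mathrm{Re}\,M^2)})_{x,y}\,ds=\big((-\Delta+\mathrm{Re}\,M^2)^{-1}\big)_{x,y}=\frac{1}{2(\mathrm{Re}\,M^2)^{1/2}}e^{-(\mathrm{Re}\,M^2)^{1/2}|x-y|}=C'_{x,y},
\]
which is precisely the formula for $C_{x,y}$ with $M^2$ replaced by $\mathrm{Re}\,M^2$. Equivalently this is the Feynman--Kac bound $|G[\sigma]_{x,y}|\le\int_0^\infty e^{-s\,\mathrm{Re}\,M^2}p_s(x,y)\,ds$, the phase $\exp\big(-i\int_0^s(\mathrm{Im}\,M^2+2\sqrt{2\lambda}\,\sigma(b_r))\,dr\big)$ having modulus $\le1$.

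\emph{Distributional $\sigma$ and the main obstacle.} Since $\sigma$ is only a distribution I would first run the above with a mollified field $\sigma_\varepsilon=\sigma*\rho_\varepsilon$ (on a bounded interval, to stay within the $d\nu_{t,L}$ framework), where $V_\varepsilon$ is a genuine continuous potential and the semigroup representation and Trotter formula are textbook, obtaining $|G[\sigma_\varepsilon]_{x,y}|\le C'_{x,y}$ uniformly in $\varepsilon$. I would then remove the regularization by writing $G[\sigma_\varepsilon]_{x,y}=\langle H_0^{-1/2}\delta_x,(I+i\widetilde V_\varepsilon)^{-1}H_0^{-1/2}\delta_y\rangle$ and using the $d\nu_{t}$-a.e.\ convergence $\widetilde V_\varepsilon\to\widetilde V$ (in Hilbert--Schmidt norm on a bounded interval), which forces $(I+i\widetilde V_\varepsilon)^{-1}\to(I+i\widetilde V)^{-1}$ in operator norm and hence $G[\sigma_\varepsilon]_{x,y}\to G[\sigma]_{x,y}$, so the bound passes to the limit (and likewise as the interval grows). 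I expect the genuinely delicate point to be exactly this last step — pinning down an unambiguous meaning of $G[\sigma]$ for the white-noise $\sigma$ and justifying that the regularization limit commutes with everything — rather than the domination estimate itself, which is soft: it works precisely because only the free, real-mass part of $H[\sigma]$ controls the size of the propagator.
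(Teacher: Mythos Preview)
Your domination estimate is exactly the paper's: for smooth $\sigma$ the paper also writes the Wiener (Feynman--Kac) representation
\[
G[\sigma]_{x,y}=\int_0^\infty\!ds\,e^{-sM^2}\!\int dW^s_{x,y}(\omega)\,\exp\Big\{-2i\sqrt{2\lambda}\int_0^s\sigma(\omega(\xi))\,d\xi\Big\}
\]
and bounds the unimodular phase by $1$, obtaining $|G[\sigma]_{x,y}|\le\int_0^\infty e^{-s\,\mathrm{Re}\,M^2}\int dW^s_{x,y}=C'_{x,y}$. Your Trotter version is an equivalent rephrasing.

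Where you and the paper diverge is in the passage from smooth $\sigma$ to white noise. The paper does \emph{not} define $G[\sigma]$ operator-theoretically; it mollifies with $\eta_\varepsilon$, computes $\partial_\varepsilon G[\sigma*\eta_\varepsilon]_{x,y}$ explicitly (using $\Delta G=(M^2+2i\sqrt{2\lambda}\,\sigma*\eta_\varepsilon)G-\delta$ and integrating by parts in $z$), bounds $\int d\nu_t\,|\partial_\varepsilon G[\sigma*\eta_\varepsilon]_{x,y}|\le c(t)\varepsilon^{-3/4}$, and concludes that $G[\sigma*\eta_\varepsilon]_{x,y}$ is Cauchy in $L^1(\mathcal{S}'(\mathbb{R}),d\nu_t)$. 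Thus $G[\sigma]_{x,y}$ is \emph{defined} as this $L^1$ limit and inherits the bound $C'_{x,y}$. The decay factors $C'_{x,z}C'_{z,y}$ appearing in the estimate of $\partial_\varepsilon G$ make the argument work directly for full-line white noise $d\nu_t$, with no finite-volume restriction.

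Your route through $(I+i\widetilde V)^{-1}$ is more conceptual, but it carries the gap you yourself flag: $\widetilde V=H_0^{-1/2}\sigma H_0^{-1/2}$ is Hilbert--Schmidt (hence bounded) $d\nu_{t,L}$-a.e.\ only after restricting $\sigma$ to $[-L,L]$; for full-line white noise $\mathbb{E}\|\widetilde V\|_{\mathrm{HS}}^2=t\int_{\mathbb{R}}(H_0^{-1})_{z,z}^2\,dz=\infty$, and there is no obvious reason $\widetilde V$ should even be bounded. So your definition of $G[\sigma]$ and your convergence $\widetilde V_\varepsilon\to\widetilde V$ live on $[-L,L]$, and you still owe the $L\to\infty$ step (``likewise as the interval grows'') --- which is precisely what the paper's explicit $L^1$-Cauchy computation bypasses. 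If you want to keep your operator-theoretic framing, you would need either a local-in-$(x,y)$ argument (e.g.\ showing $G[\sigma\chi_{[-L,L]}]_{x,y}$ stabilizes as $L\to\infty$ for fixed $x,y$, using the exponential decay of $C'$) or to fall back on an $L^1(d\nu_t)$ limit as the paper does.
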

\begin{proof}
Using the Wiener integral representation \cite{JL02}, we have, for $\sigma\!\in\! C^\infty(\mathbb{R})\!\cap\!\mathcal{S}'(\mathbb{R})$,
\begin{align}
&\quad\,\,\big|G[\sigma]_{x,y}\big|=\big|((-\Delta\!+\!M^2\!+\!2i\sqrt{2\lambda}\sigma)^{-1})_{x,y}\big|\notag\\
&=\bigg|\int_0^{\infty}\!\!ds\,e^{-sM^2}\!\int\!dW^s_{x,y}(\omega)\,\exp\Big\{\!-\!2i\sqrt{2\lambda}\!\int_0^s\!\! d\xi\,\sigma(\omega(\xi))\Big\}\bigg|\notag\\
&\le\int_0^{\infty}\!\!ds\,e^{-s\,\mathrm{Re}\,M^2}\!\int\!dW^s_{x,y}(\omega)=C'_{x,y},
\end{align}
where $dW^s_{x,y}$ is the conditional Wiener measure on the set of all paths $\omega\!: [0,s]\!\to\!\mathbb{R}$ satisfying $\omega(0)\!=\!x$ and $\omega(s)\!=\!y$.

Let $\eta_\varepsilon(x) \!=\!(2\pi\varepsilon)^{-1/2}e^{-x^2/2\varepsilon}$ for $\varepsilon\!>\!0$,
which satisfies $\partial_\varepsilon \eta_\varepsilon \!=\! \tfrac12 \Delta_x\eta_\varepsilon$.
Then, for general $\sigma\!\in\! \mathcal{S}'(\mathbb{R})$, we have $\sigma * \eta_\varepsilon \!\in\! C^\infty(\mathbb{R})\!\cap\!\mathcal{S}'(\mathbb{R})$ and
\begin{align}
\partial_\varepsilon G[\sigma * \eta_\varepsilon]_{x,y} 
= -i\sqrt{2\lambda}\int\!dz\, G[\sigma * \eta_\varepsilon]_{x,z} G[\sigma * \eta_\varepsilon]_{z,y}(\sigma * \Delta\eta_\varepsilon)_z \notag\\
=-i\sqrt{2\lambda}\int\!dz\, \Delta_z(G[\sigma * \eta_\varepsilon]_{x,z} G[\sigma * \eta_\varepsilon]_{z,y})(\sigma * \eta_\varepsilon)_z.
\end{align}
Assuming $x\!\le\! y$ without loss of generality and using the facts
\begin{align}
\Delta_{x_2} G[\sigma * \eta_\varepsilon]_{x_1,x_2}&= \big(M^2+2i\sqrt{2\lambda}\,(\sigma * \eta_\varepsilon)_{x_2}\big)G[\sigma * \eta_\varepsilon]_{x_1,x_2} - \delta_{x_1,x_2},\\
\partial_{x_2} G[\sigma * \eta_\varepsilon]_{x_1,x_2} &= 
\begin{cases}
\int_{-\infty}^{x_2}\!dz\, \Delta_{z}G[\sigma * \eta_\varepsilon]_{x_1,z} &\text{ for } x_2\!<\!x_1\\
-\int_{x_2}^{\infty}\!dz\, \Delta_{z}G[\sigma * \eta_\varepsilon]_{x_1,z} &\text{ for } x_2\!>\!x_1
\end{cases},
\end{align}
we have $\big|\partial_\varepsilon G[\sigma * \eta_\varepsilon]_{x,y}\big|\le 2\sqrt{2\lambda}\,(Q_1 \! + \! Q_2\! + \! Q_3)$ with
\begin{align}
Q_1 &= \int\!dz\, \big|G[\sigma * \eta_\varepsilon]_{x,z}\big| \big|G[\sigma * \eta_\varepsilon]_{z,y}\big|\big(|M^2|+2\sqrt{2\lambda}\,|(\sigma * \eta_\varepsilon)_z|\big)|(\sigma * \eta_\varepsilon)_z|\notag\\
&\le \int\!dz\, C'_{x,z} C'_{z,y}\big(|M^2|+2\sqrt{2\lambda}\,|(\sigma * \eta_\varepsilon)_z|\big)|(\sigma * \eta_\varepsilon)_z|,\\
Q_2 &= \tfrac12\big|G[\sigma * \eta_\varepsilon]_{x,y}\big|\big(|(\sigma * \eta_\varepsilon)_x|+ |(\sigma * \eta_\varepsilon)_y|\big)\notag\\
&\le \tfrac12 C'_{x,y}\big(|(\sigma * \eta_\varepsilon)_x|+ |(\sigma * \eta_\varepsilon)_y|\big),\\
Q_3 &\le \Big(\int_{-\infty}^{x}\!\!dz\!\int_{-\infty}^z\!\!dz_1\!\int_{-\infty}^z\!\!dz_2
+\int_{x}^{y}\!\!dz\!\int_z^{\infty}\!\!dz_1\!\int_{-\infty}^z\!\!dz_2
+\int_{y}^{\infty}\!\!dz\!\int_z^{\infty}\!\!dz_1\!\int_z^{\infty}\!\!dz_2
\Big)\notag\\
&\;C'_{x,z_1}C'_{z_2,y}\big(|M^2|+2\sqrt{2\lambda}\,|(\sigma * \eta_\varepsilon)_{z_1}|\big)\big(|M^2|+2\sqrt{2\lambda}\,|(\sigma * \eta_\varepsilon)_{z_2}|\big)|(\sigma * \eta_\varepsilon)_z|.
\end{align}
Since
\begin{align}
\int\!d\nu_t(\sigma)\,\prod_{k=1}^n|(\sigma * \eta_\varepsilon)_{x_k}|\le\Big(\int\!d\nu_t(\sigma)\,\prod_{k=1}^n(\sigma * \eta_\varepsilon)_{x_k}^2\Big)^{1/2}\le c(t,n) \varepsilon^{-n/4},
\end{align}
we obtain that, for $\varepsilon_2\!>\!\varepsilon_1\!>\!0$,
\begin{align}
&\int\!d\nu_t(\sigma)\, \big|G[\sigma * \eta_{\varepsilon_2}]_{x,y}-G[\sigma * \eta_{\varepsilon_1}]_{x,y}\big| \notag\\
\le &\int_{\varepsilon_1}^{\varepsilon_2}\!\!d\varepsilon \int\!d\nu_t(\sigma)\, \big|\partial_\varepsilon G[\sigma * \eta_\varepsilon]_{x,y}\big| \le \int_{\varepsilon_1}^{\varepsilon_2}\!\!d\varepsilon\, c(t)\varepsilon^{-3/4},
\end{align}
which goes to 0 as $\varepsilon_2$ goes to 0.
Thus $G[\sigma]_{x,y}:=\lim_{\varepsilon\to 0^+}G[\sigma * \eta_{\varepsilon}]_{x,y}$ exists in $L^1(\mathcal{S}'(\mathbb{R}),d\nu_t)$ and the lemma follows.
\end{proof}

By Lemma \ref{lemma1} and Fubini's theorem, we have that   $G[\sigma_1\!+\!\cdots\!+\!\sigma_n]_{x,y}$ is well defined $d\nu_{t_1}\!\times\!\cdots\!\times\! d\nu_{t_n}$-a.e with the same bound $C'_{x,y}$.



Also we need the following two combinatorial results:

\begin{lemma}\label{lemma2}
\begin{align}
\sum_{K\subset \widehat{N}}\frac{(|K|\!+\!n_1)!}{n_1!}\frac{(N\!-\!|K|\!+\!n_2)!}{n_2!}=\frac{(N\!+\!n_1\!+\!n_2\!+\!1)!}{(n_1\!+\!n_2\!+\!1)!},
\end{align}
where $|K|$ denotes the cardinality of the set $K$.
\end{lemma}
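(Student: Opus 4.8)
The plan is to organize the sum over subsets by cardinality and then reduce to a standard convolution identity. Since there are exactly $\binom{N}{k}$ subsets $K\subset\widehat N$ with $|K|=k$, the left-hand side equals
\[
\sum_{k=0}^{N}\binom{N}{k}\,\frac{(k+n_1)!}{n_1!}\,\frac{(N-k+n_2)!}{n_2!}.
\]
First I would convert each factorial ratio into a binomial coefficient via $\frac{(k+n_1)!}{n_1!}=k!\binom{k+n_1}{n_1}$ and $\frac{(N-k+n_2)!}{n_2!}=(N-k)!\binom{N-k+n_2}{n_2}$. Using $\binom{N}{k}\,k!\,(N-k)!=N!$, the displayed sum collapses to
\[
N!\sum_{k=0}^{N}\binom{k+n_1}{n_1}\binom{N-k+n_2}{n_2}.
\]

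Next I would invoke the Vandermonde-type convolution identity
\[
\sum_{k=0}^{N}\binom{k+n_1}{n_1}\binom{N-k+n_2}{n_2}=\binom{N+n_1+n_2+1}{n_1+n_2+1}.
\]
For self-containedness I would prove this by generating functions: since $\sum_{k\ge 0}\binom{k+n_1}{n_1}x^{k}=(1-x)^{-(n_1+1)}$ and likewise with $n_2$, the left side above is the coefficient of $x^{N}$ in the product $(1-x)^{-(n_1+n_2+2)}=\sum_{N\ge 0}\binom{N+n_1+n_2+1}{N}x^{N}$, which is $\binom{N+n_1+n_2+1}{N}=\binom{N+n_1+n_2+1}{n_1+n_2+1}$. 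Substituting back, the left-hand side of the lemma equals $N!\,\binom{N+n_1+n_2+1}{n_1+n_2+1}=N!\cdot\frac{(N+n_1+n_2+1)!}{(n_1+n_2+1)!\,N!}$, and the factor $N!$ cancels to give exactly the claimed right-hand side.

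There is essentially no analytic obstacle here; the content is entirely elementary combinatorics. The only points requiring care are the bookkeeping of factorials when passing from the subset-indexed sum to the cardinality-indexed sum, and recognizing the convolution as an instance of the negative-binomial series. As an alternative to the generating-function step one could prove the convolution identity by induction on $N$ using Pascal's rule, but the generating-function route is shorter and I would present it that way.
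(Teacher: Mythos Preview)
Your proof is correct and follows essentially the same route as the paper: both reduce the subset sum to the binomial convolution $\sum_{k=0}^N\binom{k+n_1}{n_1}\binom{N-k+n_2}{n_2}=\binom{N+n_1+n_2+1}{N}$ and prove the latter by reading off the coefficient of $x^N$ in $(1-x)^{-(n_1+1)}(1-x)^{-(n_2+1)}=(1-x)^{-(n_1+n_2+2)}$. The only difference is that you spell out the bookkeeping (grouping subsets by cardinality, cancelling the $N!$) more explicitly than the paper does.
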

\begin{proof}
Comparing the coefficient of $x^N$ on either side of
\begin{align}
(1\!-\!x)^{-n_1-1}(1\!-\!x)^{-n_2-1}=(1\!-\!x)^{-n_1-n_2-2},
\end{align}
we obtain the combinatorial identity
\begin{align}
\sum_{N_1=0}^N\binom{N_1\!+\!n_1}{N_1}\binom{N\!-\!N_1\!+\!n_2}{N\!-\!N_1}=\binom{N\!+\!n_1\!+\!n_2\!+\!1}{N},
\end{align}
which is equivalent to the required one.
\end{proof}

Let $A_n$ be the $n$th Catalan number \cite{T09}, i.e. $A_0\!=\!1$ and $A_n\!=\!\sum_{h=0}^{n-1}A_h A_{n-h-1}$ for $n\!\ge\! 1$.
Let
\begin{align}
B_{n,m}=
\begin{cases}
1 &\text{if }n\!=\!m\!=\!0\\
\sum_{\substack{n_1,\dots,n_m\ge 1\\n_1+\cdots+n_m=n}}A_{n_1-1}\cdots A_{n_m-1} &\text{if }n\!\ge\!m\!\ge\!1\\
0 &\text{otherwise}
\end{cases}
.
\end{align}

\begin{lemma}\label{lemma3}
The generating function for $(A_n)_{n\ge 0}$ is $\frac{2}{1+\sqrt{1-4x}}$ and the generating function for $(B_{n,m})_{n\ge 0}$ is $\big(\frac{1-\sqrt{1-4x}}{2}\big)^m$. Moreover, for $n,m\!\ge\!1$,
\begin{gather}
B_{n,m}=\frac{1}{n}\sum_{h=0}^{{n}-1}A_{h}\big(m B_{n-h-1,m-1}\!+\!(2n\!-\!2h\!-\!m\!-\!2)B_{n-h-1,m}\big).
\end{gather}
\end{lemma}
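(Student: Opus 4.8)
The plan is to establish the three assertions in order, since each feeds the next. First I would recall the classical Catalan generating function identity: if $a(x)\!=\!\sum_{n\ge 0}A_n x^n$, then the recursion $A_n\!=\!\sum_{h=0}^{n-1}A_hA_{n-h-1}$ for $n\!\ge\!1$ together with $A_0\!=\!1$ translates into $a(x)\!=\!1\!+\!x\,a(x)^2$. Solving this quadratic and selecting the branch analytic at $0$ (the one with $a(0)\!=\!1$) gives $a(x)\!=\!\frac{2}{1+\sqrt{1-4x}}$, equivalently $x\,a(x)\!=\!\frac{1-\sqrt{1-4x}}{2}$. For the second claim, observe that $B_{n,m}$ is by definition the coefficient of $x^n$ in $\big(\sum_{k\ge 1}A_{k-1}x^k\big)^m\!=\!\big(x\,a(x)\big)^m$, because expanding the $m$th power and collecting the exponent $k_1\!+\!\cdots\!+\!k_m\!=\!n$ with each $k_j\!\ge\!1$ reproduces exactly the defining sum $\sum A_{k_1-1}\cdots A_{k_m-1}$; the case $n\!=\!m\!=\!0$ matches the constant term $1$, and the vanishing for the remaining $(n,m)$ is automatic. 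Hence $\sum_{n\ge 0}B_{n,m}x^n\!=\!\big(\frac{1-\sqrt{1-4x}}{2}\big)^m$.

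For the recursion, let $b_m(x)\!=\!\big(x\,a(x)\big)^m$. The cleanest route is to differentiate. From $a\!=\!1\!+\!xa^2$ one gets $a'\!=\!a^2\!+\!2xaa'$, so $a'(1\!-\!2xa)\!=\!a^2$, i.e. $a'\!=\!\frac{a^2}{1-2xa}$; note $1\!-\!2xa\!=\!\sqrt{1-4x}$. Now $b_m\!=\!x^m a^m$, so
\begin{align}
b_m' = m x^{m-1}a^m + x^m\, m a^{m-1}a' = \frac{m}{x}b_m + m\,\frac{x^m a^{m+1}}{1-2xa}.
\end{align}
Writing $x\,b_m'\!=\!m\,b_m\!+\!\frac{m\,x\,a\,b_m}{1-2xa}$ and clearing the denominator with $1\!-\!2xa\!=\!1\!-\!2(xa)$, while using $x\,a\cdot b_m\!=\!b_{m+1}$, I would derive a polynomial relation among $b_{m-1},b_m,b_{m+1}$ and their role through the factor $xa$. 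A slightly more convenient choice is to work with $u\!=\!x\,a(x)$, which satisfies $u\!=\!x\!+\!u^2$ (equivalently $x\!=\!u\!-\!u^2$), so that $b_m\!=\!u^m$ and $\frac{dx}{du}\!=\!1\!-\!2u$. Then $x\,b_m'(x)\!=\!x\frac{du}{dx}\frac{d}{du}u^m\!=\!\frac{(u-u^2)\,m u^{m-1}}{1-2u}\!=\!\frac{m(u^m-u^{m+1})}{1-2u}$, giving the identity $(1\!-\!2u)\,x\,b_m'\!=\!m\,b_m\!-\!m\,b_{m+1}$, i.e.
\begin{align}
x\,b_m' - 2u\,x\,b_m' = m\,b_m - m\,b_{m+1}.
\end{align}
Extracting the coefficient of $x^n$ on both sides — using that $[x^n]\,x b_m'\!=\!n B_{n,m}$, that $[x^n]\,b_{m+1}\!=\!B_{n,m+1}$, and that the product with $u\!=\!\sum_{h\ge 0}A_h x^{h+1}$ shifts and convolves indices — yields a linear recursion. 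Re-indexing $m\!\mapsto\!m\!-\!1$ (to match the statement's $B_{n,m}$ in terms of $B_{\cdot,m-1}$ and $B_{\cdot,m}$) and rearranging the convolution term so that the Catalan factor $A_h$ is pulled out front should reproduce exactly
\[
B_{n,m}=\frac{1}{n}\sum_{h=0}^{n-1}A_h\big(m B_{n-h-1,m-1}+(2n-2h-m-2)B_{n-h-1,m}\big).
\]

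The main obstacle is purely bookkeeping: matching the shifted convolution indices coming from the factor $u\!=\!xa(x)$ against the explicit coefficients $2n\!-\!2h\!-\!m\!-\!2$ and $m$ in the claimed formula, and making sure the $m\!\mapsto\!m\!-\!1$ relabeling is done consistently on both the $b_{m-1}$ and $b_m$ sides. It is worth double-checking the edge cases $n\!=\!m$ (where $B_{n,m}\!=\!1$, $B_{n-h-1,m-1}$ forces $h\!=\!0$, $m\!=\!1$ unless further terms vanish) and small values like $(n,m)\!=\!(2,1)$, $(3,2)$ numerically to confirm the constants, since an off-by-one in the coefficient of $B_{n-h-1,m}$ would otherwise go unnoticed. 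No genuine analytic difficulty arises; everything reduces to manipulation of the algebraic relation $x\!=\!u\!-\!u^2$ and formal-power-series coefficient extraction.
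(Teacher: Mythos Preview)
Your treatment of the two generating functions is the same as the paper's, and your identity $(1-2u)\,x\,b_m'=m(b_m-b_{m+1})$ (with $u=xa(x)$, $b_m=u^m$) is equivalent to the paper's
\[
((xg)^m)'=mg(xg)^{m-1}-mg(xg)^m+2xg\,((xg)^m)',
\]
obtained there by differentiating $(xg)^m$ and substituting $g'=(1+xg^2)'$. So the overall strategy matches.

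The gap is in your coefficient extraction. Reading off $[x^n]b_{m+1}=B_{n,m+1}$ and then reindexing $m\mapsto m-1$, as you propose, yields
\[
(m-1)B_{n,m}=(m-1-n)B_{n,m-1}+2\sum_{h}A_h(n-h-1)B_{n-h-1,m-1},
\]
which is a different identity: it carries a bare $B_{n,m-1}$ term rather than a convolution with $(A_h)$, and the prefactor is $(m-1)$ rather than $1/n$. It does not rearrange into the lemma's formula. The correct move is \emph{not} to reindex $m$, but to rewrite the right-hand side as $m\,u\,(b_{m-1}-b_m)$ using $b_m=u\,b_{m-1}$ before extracting coefficients. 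Then every term on both sides is a convolution against $u=\sum_{h\ge 0}A_hx^{h+1}$ (equivalently $g=\sum_h A_hx^h$), and extracting $[x^n]$ gives
\[
nB_{n,m}=\sum_{h=0}^{n-1}A_h\bigl(mB_{n-h-1,m-1}-mB_{n-h-1,m}+2(n-h-1)B_{n-h-1,m}\bigr),
\]
which is exactly the claimed recursion. This is precisely how the paper reads off the coefficient of $x^{n-1}$ from its displayed identity, where the factors $g$ and $xg$ on the right already provide the needed convolution structure.
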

\begin{proof}
Let $g(x)\!=\!\frac{2}{1+\sqrt{1-4x}}\!=\!\sum_{n=0}^\infty g_n x^n$. 
Since $g\!=\!1\!+\!xg^2$, we have $g_0\!=\!1$ and $g_n\!=\!\sum_{h=0}^{n-1}g_h g_{n-h-1}$ for $n\!\ge\! 1$.
Thus $g$ is the generating function for $(A_n)_{n\ge 0}$ and $(xg)^m\!=\!\big(\frac{1-\sqrt{1-4x}}{2}\big)^m$ is the generating function for $(B_{n,m})_{n\ge 0}$.
We complete the proof by comparing the coefficient of $x^{n-1}$ on either side of
\begin{align} 
((xg)^m)'&=m(g\!+\!xg')(xg)^{m-1}
=m(g\!+\!x(1\!+\!xg^2)')(xg)^{m-1}\notag\\
&=mg(xg)^{m-1}-mg(xg)^m+2xg((xg)^m)'.
\end{align}

\end{proof}

We are now ready to start the inductions.
First we get a bound for the integrand of $\big|W^{(n)}_{t;z_0,z_1,\dots,z_N}[\sigma]\big|$ in Lemma \ref{lemma4}.
Next $W^{(n)}_{t;z_0,z_1,\dots,z_N}[\sigma]$ is written in terms of $W^{(n,m)}_{t,y_1,y_2;z_1,\dots,z_{N};z'_1,\dots,z'_{N'}}[\sigma]$ in Lemma \ref{lemma5} and then a bound for the integrand of $\big|W^{(n,m)}_{t,y_1,y_2;z_1,\dots,z_{N};z'_1,\dots,z'_{N'}}[\sigma]\big|$ is gotten in Lemma \ref{lemma6}.
All these lemmas are proved inductively.

\begin{lemma}\label{lemma4}
\begin{align}
\int_{\mathbb{R}^N}\!\!dz_1\cdots dz_N \big|W^{(n)}_{t;z_0,z_1,\dots,z_N}[\sigma]\big|\le \frac{t^n}{4^{n+1}}\frac{(2\sqrt{2\lambda})^{N+2n+1}}{(\mathrm{Re}\,M^2)^{N+\frac32 n+\frac12}} \frac{(N\!+\!2n)!}{(2n)!}A_n.
\end{align}
In particular, $\big|W^{(n)}_{t;z_0}[\sigma]\big|\le \dfrac{t^n}{4^{n+1}}\dfrac{(2\sqrt{2\lambda})^{2n+1}}{(\mathrm{Re}\,M^2)^{\frac32 n+\frac12}}A_n$.
\end{lemma}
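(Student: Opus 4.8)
The plan is to induct on $n$, establishing the estimate for all $N\ge 0$ at once. For the base case $n=0$ I would start from the explicit formula
$W^{(0)}_{t;z_0,z_1,\dots,z_N}[\sigma]=\tfrac12(-2i\sqrt{2\lambda})^{N+1}\sum_{\tau\in S_N}\big(\nu_t*(G_{z_0,z_{\tau(1)}}G_{z_{\tau(1)},z_{\tau(2)}}\cdots G_{z_{\tau(N)},z_0})\big)[\sigma]$, bound each $|G[\sigma]_{\cdot,\cdot}|$ by $C'_{\cdot,\cdot}$ using Lemma~\ref{lemma1} (together with its $d\nu_t$-a.e. extension noted right after it and $\nu_t*C'=C'$), and observe that each of the $N!$ terms contributes the same integral $\int dz_1\cdots dz_N\,C'_{z_0,z_1}C'_{z_1,z_2}\cdots C'_{z_N,z_0}$. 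Bounding the last factor by its maximum $\tfrac12(\mathrm{Re}\,M^2)^{-1/2}$ and using $\int_{\mathbb R}dz\,C'_{a,z}=(\mathrm{Re}\,M^2)^{-1}$ to integrate the remaining chain one variable at a time yields $\tfrac12(\mathrm{Re}\,M^2)^{-N-1/2}$, and hence exactly the asserted bound with $A_0=1$.

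For the inductive step, assume the estimate holds for all $n'<n$. Starting from the recursion for $W^{(n)}_{t;z_0,z_1,\dots,z_N}[\sigma]$, I take absolute values, pass them through $\nu_{t-s}*(\cdot)$, and use Fubini to sort the integrations: for a fixed $K\subset\widehat N$, the variables $z_k$ with $k\in K$ are the legs of the first factor $W^{(h)}_{s;x,(z_k)_{k\in K}}$ (base point $x$, $|K|$ further legs), while $x$ together with the $z_k$ for $k\notin K$ are the legs of the second factor $W^{(n-h-1)}_{s;z_0,x,(z_k)_{k\in\widehat N\setminus K}}$ (base point $z_0$, $N-|K|+1$ further legs). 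Applying the inductive hypothesis to each factor — the bounds there are uniform in the base point and in the $\sigma$-argument, so both the $x$-integral and the integration of $\bar\sigma$ against $d\nu_{t-s}$ (total mass one) pass through at no cost — gives the product of the two inductive bounds, integrated in $s$ over $[0,t]$. A direct check shows the powers of $2\sqrt{2\lambda}$, of $\mathrm{Re}\,M^2$, and of $4$ add up exactly to $N+2n+1$, $N+\tfrac32 n+\tfrac12$, and $4^{n+1}$ respectively, while the powers of $s$ combine to $s^{n-1}$, contributing $\int_0^t s^{n-1}\,ds=t^n/n$.

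It then remains to bound the leftover combinatorial prefactor
\[
\frac{1}{n}\sum_{h=0}^{n-1}A_h A_{n-h-1}\sum_{K\subset\widehat N}\frac{(|K|+2h)!}{(2h)!}\,\frac{(N-|K|+2(n-h-1)+1)!}{(2(n-h-1))!}
\]
by $\frac{(N+2n)!}{(2n)!}A_n$. Rewriting $\frac{(N-|K|+2n-2h-1)!}{(2n-2h-2)!}=(2n-2h-1)\frac{(N-|K|+2n-2h-1)!}{(2n-2h-1)!}$ and applying Lemma~\ref{lemma2} with $n_1=2h$, $n_2=2n-2h-1$, the inner sum over $K$ collapses to $(2n-2h-1)\frac{(N+2n)!}{(2n)!}$. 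Since $\sum_{h=0}^{n-1}f(h)=\sum_{h=0}^{n-1}f(n-1-h)$, symmetrizing turns $2n-2h-1$ into the average of $2n-2h-1$ and $2h+1$, namely $n$, so $\sum_{h=0}^{n-1}A_hA_{n-h-1}(2n-2h-1)=n\sum_{h=0}^{n-1}A_hA_{n-h-1}=nA_n$ by the Catalan recursion; dividing by $n$ gives precisely the claimed bound (in fact with equality). The "in particular" assertion is the $N=0$ case.

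The part requiring the most care is the bookkeeping in the inductive step: correctly distributing the $N+1$ leg-variables — including the newly created internal vertex $x$ — between the two subfactors and then recognizing that the resulting double sum over $h$ and $K$ reproduces the Catalan–factorial structure of the bound via Lemma~\ref{lemma2} and the symmetrization above. The analytic side — justifying the interchange of $\int ds$, $\int dx$, $\int\prod_k dz_k$, and $\nu_{t-s}*(\cdot)$, and the a.e. definedness of the integrands — is routine given Lemma~\ref{lemma1} and the finiteness already supplied by the inductive bound, so I would dispatch it briefly.
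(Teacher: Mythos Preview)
Your proof is correct and follows essentially the same approach as the paper's own proof: induction on $n$, the base case handled via Lemma~\ref{lemma1} and the chain integral of $C'$, and the inductive step combining the recursion with Lemma~\ref{lemma2} (with $n_1=2h$, $n_2=2n-2h-1$) and the same symmetrization $\tfrac12\big((2n-2h-1)+(2h+1)\big)=n$ to recover the Catalan recursion. Your write-up is in fact slightly more explicit than the paper's about why the $x$- and $\bar\sigma$-integrations cost nothing, but the argument is otherwise identical.
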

\begin{proof}
We use induction on $n$. If $n\!=\!0$, 
\begin{align}
&\quad\,\int_{\mathbb{R}^N}\!\!dz_1\cdots dz_N\big|W^{(0)}_{t;z_0,z_1,\dots,z_N}[\sigma]\big|
\le\frac12(2\sqrt{2\lambda})^{N+1}\sum_{\tau\in S_{N}}\notag\\
&\quad\,\int_{\mathbb{R}^N}\!\!dz_1\cdots dz_N\big(\nu_{t}\!*\!\big|
G_{z_0,z_{\tau(1)}}G_{z_{\tau(1)},z_{\tau(2)}}\!\cdots G_{z_{\tau(N)},z_0}\big|\big)[\sigma]\notag\\
&\le\frac12(2\sqrt{2\lambda})^{N+1}\sum_{\tau\in S_{N}}\int_{\mathbb{R}^N}\!\!dz_1\cdots dz_N
C'_{z_0,z_{\tau(1)}}C'_{z_{\tau(1)},z_{\tau(2)}}\!\cdots C'_{z_{\tau(N)},z_0}\notag\\
&\le\frac12(2\sqrt{2\lambda})^{N+1}N!\sup_{z'\in\mathbb{R}}C'_{z',z_0}\int_{\mathbb{R}}\!dz_1\,C'_{z_0,z_1}\cdots\int_{\mathbb{R}}\!dz_N\,C'_{z_{N-1},z_N}\notag\\
&=\frac14\frac{(2\sqrt{2\lambda})^{N+1}}{(\mathrm{Re}\,M^2)^{N+\frac12}}N!.
\end{align}
Assuming it holds for $0\!\le\!n\!\le\!\bar{n}\!-\!1$, we consider the case $n\!=\!\bar{n}$:
\begin{align}
&\quad\,\int_{\mathbb{R}^N}\!\!dz_1\cdots dz_N\big|W^{(\bar{n})}_{t;z_0,z_1,\dots,z_N}[\sigma]\big|\notag\\
&\le\sum_{h=0}^{\bar{n}-1}\sum_{K\subset \widehat{N}}\int_{0}^t\!ds\!\int\!d\nu_{t-s}(\bar{\sigma})\!\int_{\mathbb{R}^{N+1}}\!\!dx dz_1\cdots dz_N\notag\\
&\quad\;\Big|W^{(h)}_{s;x,(z_k)_{k\in K}}[\sigma\!+\!\bar{\sigma}]\Big|\Big|W^{(\bar{n}-h-1)}_{s;z_0,x,(z_k)_{k\in \widehat{N}\backslash K}}[\sigma\!+\!\bar{\sigma}]\Big|\notag\\
&\le\frac{1}{4^{\bar{n}+1}}\frac{(2\sqrt{2\lambda})^{N+2\bar{n}+1}}{(\mathrm{Re}\,M^2)^{N+\frac32 \bar{n}+\frac12}}\sum_{h=0}^{\bar{n}-1}A_{h}A_{\bar{n}-h-1}\!\int_{0}^t\!ds\, s^{\bar{n}-1}\notag\\
&\quad\;(2\bar{n}\!-\!2h\!-\!1)\sum_{K\subset \widehat{N}}\frac{(|K|\!+\!2h)!}{(2h)!}\frac{(N\!-\!|K|\!+\!2\bar{n}\!-\!2h\!-\!1)!}{(2\bar{n}\!-\!2h\!-\!1)!}\notag\\
&=\frac{t^{\bar{n}}}{4^{\bar{n}+1}}\frac{(2\sqrt{2\lambda})^{N+2\bar{n}+1}}{(\mathrm{Re}\,M^2)^{N+\frac32 \bar{n}+\frac12}}\frac{(N\!+\!2\bar{n})!}{(2\bar{n})!}\sum_{h=0}^{\bar{n}-1}\frac{2\bar{n}\!-\!2h\!-\!1}{\bar{n}}A_{h}A_{\bar{n}-h-1}
\end{align}
by the inductive assumption and Lemma \ref{lemma2}.
Since
\begin{align}
\sum_{h=0}^{\bar{n}-1}\frac{2\bar{n}\!-\!2h\!-\!1}{\bar{n}}A_{h}A_{\bar{n}-h-1}
&= \frac12\sum_{h=0}^{\bar{n}-1}\Big(\frac{2\bar{n}\!-\!2h\!-\!1}{\bar{n}}+\frac{2h\!+\!1}{\bar{n}}\Big)A_{h}A_{\bar{n}-h-1}\notag\\
&= \sum_{h=0}^{\bar{n}-1}A_{h}A_{\bar{n}-h-1} = A_{\bar{n}},
\end{align}
we can advance the induction.
\end{proof}

We define recursively 
\begin{align}
&W^{(n,0)}_{t,y_1,y_2;z_1,\dots,z_N;z'_1,\dots,z'_{N'}}[\sigma]=\delta_{n,0}\delta_{N',0}W^{(0)}_{t,y_1,y_2;z_1,\dots,z_N}[\sigma],\\
&W^{(n,m)}_{t,y_1,y_2;z_1,\dots,z_{N};z'_1,\dots,z'_{N'}}[\sigma]=0\\
\intertext{for $m \!>\! n$ and}
&W^{(n,m)}_{t,y_1,y_2;z_1,\dots,z_{N};z'_1,\dots,z'_{N'}}[\sigma]=\sum_{h=0}^{n-1}\sum_{K\subset \widehat{N}'}\int_{0}^t\!ds\!
\int_{\mathbb{R}}\!dx\Big(\nu_{t-s}\!*\!\Big(W^{(h)}_{s;x,(z'_k)_{k\in K}}\notag\\
&\Big(W^{(n-h-1,m-1)}_{s,y_1,y_2;x,z_1,\dots,z_{N};(z'_k)_{k\in \widehat{N}'\backslash K}}+W^{(n-h-1,m)}_{s,y_1,y_2;z_1,\dots,z_{N};x,(z'_k)_{k\in \widehat{N}'\backslash K}}\Big)\Big)\Big)[\sigma]
\end{align}
for $1\!\le\! m \!\le\! n$.
Then we can decompose $W^{(n)}_{t,y_1,y_2;z_1,\dots,z_N}[\sigma]$ as

\begin{lemma}\label{lemma5}
\begin{align}
W^{(n)}_{t,y_1,y_2;z_1,\dots,z_N}[\sigma]=\sum_{m\ge 0}\sum_{K\subset \widehat{N}}W^{(n,m)}_{t,y_1,y_2;(z_k)_{k\in \widehat{N}\backslash K};(z_k)_{k\in K}}[\sigma],
\end{align}
where the number of nonzero summands is finite.
\end{lemma}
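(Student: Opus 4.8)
The plan is to prove the identity by induction on $n$, exploiting the fact that the recursive definition of $W^{(n,m)}$ was engineered precisely so that summing over $m$ collapses into the recursion for $W^{(n)}_{t,y_1,y_2;z_1,\dots,z_N}$. First I would treat the base case $n=0$: by the very definition, $W^{(0,m)}_{t,y_1,y_2;\dots}[\sigma]$ vanishes unless $m=0$ and $N'=0$, in which case it equals $W^{(0)}_{t,y_1,y_2;z_1,\dots,z_N}[\sigma]$. Hence the right-hand side with $n=0$ reduces, after noting that the only surviving term is $K=\widehat{N}$ (so that $\widehat{N}\backslash K=\emptyset$ plays the role of the $z'$-slots, which must be empty), to exactly $W^{(0)}_{t,y_1,y_2;z_1,\dots,z_N}[\sigma]$. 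A little care is needed with the bookkeeping of which index set feeds the $z$-slots versus the $z'$-slots, but this is just unwinding notation.

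For the inductive step, assume the identity holds for all indices $< n$. I would start from the known recursion for $W^{(n)}_{t,y_1,y_2;z_1,\dots,z_N}[\sigma]$, namely the sum over $h$ and $K\subset\widehat{N}$ of $\nu_{t-s}*(W^{(h)}_{s;x,(z_k)_{k\in K}}\,W^{(n-h-1)}_{s,y_1,y_2;x,(z_k)_{k\in\widehat{N}\backslash K}})$ integrated in $s$ and $x$. Into the second factor I substitute the inductive hypothesis, writing $W^{(n-h-1)}_{s,y_1,y_2;x,(z_k)_{k\in\widehat{N}\backslash K}}$ as $\sum_{m'\ge 0}\sum_{L}W^{(n-h-1,m')}_{s,y_1,y_2;(\cdot)_{\text{not in }L};(\cdot)_{\text{in }L}}$, where $L$ now ranges over subsets of the index set $\{x\}\cup(\widehat{N}\backslash K)$. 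The key combinatorial move is to split this sum according to whether the distinguished index $x$ lands in the "$z$" block or the "$z'$" block: these two cases produce exactly the two terms $W^{(n-h-1,m-1)}_{\dots;x,z_1,\dots;(z'_k)_{\dots}}$ and $W^{(n-h-1,m)}_{\dots;z_1,\dots;x,(z'_k)_{\dots}}$ appearing inside the recursive definition of $W^{(n,m)}$, with $m=m'+1$ in the first case and $m=m'$ in the second. Re-summing over $h$, over $K$ (now renamed to match the $\widehat{N}'$ in the definition of $W^{(n,m)}$), and over $m$, and comparing with the definition of $W^{(n,m)}_{t,y_1,y_2;\dots}$ term by term, yields the claimed formula. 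Finiteness of the number of nonzero summands follows because $W^{(n,m)}\equiv 0$ for $m>n$, so the $m$-sum effectively runs over $0\le m\le n$.

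The main obstacle I anticipate is purely notational rather than conceptual: one must be scrupulous about the roles of the two semicolon-separated groups of indices (the "$z$" arguments, which are genuine external legs, versus the "$z'$" arguments, which are the slots that will later be contracted as we build up the tree), and about how the integration variable $x$ migrates between them. In particular, when I substitute the inductive hypothesis I must verify that the subsets $L\subset\{x\}\cup(\widehat{N}\backslash K)$ with $x\in L$ are in natural bijection with subsets of $\widehat{N}\backslash K$ (giving the $m-1$ term after relabelling $x$ into the $z'$-list), and those with $x\notin L$ likewise (giving the $m$ term with $x$ prepended to the $z$-list), so that no term is double-counted or lost. Once this bookkeeping is set up cleanly — ideally by fixing a notational convention at the outset that a slot may hold either $x$ or one of the $z_k$ — the identity falls out by direct comparison, with no estimate and no further combinatorial identity needed beyond the definitions themselves.
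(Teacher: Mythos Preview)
Your strategy is exactly the paper's: induction on $n$, with the inductive step reducing to the observation that the two terms in the recursive definition of $W^{(n,m)}$ correspond to the two possible destinations of the integration variable $x$ when the inductive hypothesis is applied to the second factor. The paper runs the computation from the right-hand side to the left, whereas you run it from left to right, but this is cosmetic.

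However, the very bookkeeping you flag as the main risk is in fact scrambled in your write-up, and as written the argument does not close. In the statement of the lemma the indices in $K$ populate the \emph{second} (i.e.\ $z'$) block and those in $\widehat{N}\backslash K$ populate the first ($z$) block. Hence in the base case $n=0$ the constraint $N'=0$ forces $K=\emptyset$, not $K=\widehat{N}$ as you write. The same reversal propagates into your inductive step: when $x\in L$ the variable $x$ sits in the $z'$-block of $W^{(n-h-1,m')}$, and this matches the term $W^{(n-h-1,m)}_{s,y_1,y_2;z_1,\dots,z_N;x,(z'_k)_{\dots}}$ in the recursion (so $m=m'$), while $x\notin L$ puts $x$ in the $z$-block and matches $W^{(n-h-1,m-1)}_{s,y_1,y_2;x,z_1,\dots,z_N;(z'_k)_{\dots}}$ (so $m=m'+1$). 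You have these two cases interchanged. Once the slot convention is fixed consistently, one must also check that the pair $(H,L)$ (with $H\subset\widehat N$ from the $W^{(n)}$ recursion and $L\subset\widehat N\backslash H$ from the inductive hypothesis) is in bijection with the pair $(K,H')$ with $H'\subset K\subset\widehat N$ appearing on the right-hand side, via $K=H\cup L$ and $H'=H$; this is the reindexing the paper carries out explicitly and which your sketch leaves implicit.
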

\begin{proof}
The lemma is proved by induction on $n$ and is trivial for $n\!=\!0$. Assuming it holds for $n\!=\!0,\dots,\bar{n}\!-\!1$, we consider the case  $n\!=\!\bar{n}$:
\begin{align}
&\quad\,\sum_{m\ge 0}\sum_{K\subset \widehat{N}}W^{(\bar{n},m)}_{t,y_1,y_2;(z_k)_{k\in \widehat{N}\backslash K};(z_k)_{k\in K}}[\sigma]\notag\\
&\!=\sum_{m=1}^{\bar{n}}\sum_{K\subset \widehat{N}}\sum_{h=0}^{\bar{n}-1}\sum_{H\subset K}\int_{0}^t\!ds\!
\int_{\mathbb{R}}\!dx
\Big(\nu_{t-s}\!*\!\Big(W^{(h)}_{s;x,(z_k)_{k\in H}}\notag\\
&\quad\,\Big(W^{(\bar{n}-h-1,m-1)}_{s,y_1,y_2;x,(z_k)_{k\in \widehat{N}\backslash K};(z_k)_{k\in K\backslash H}}
\!+\!W^{(\bar{n}-h-1,m)}_{s,y_1,y_2;(z_k)_{k\in \widehat{N}\backslash K};x,(z_k)_{k\in K\backslash H}}\Big)\Big)\Big)[\sigma]\notag\\
&\!=\sum_{h=0}^{\bar{n}-1}\sum_{H\subset \widehat{N}}\int_{0}^t\!ds\!
\int_{\mathbb{R}}\!dx
\Big(\nu_{t-s}\!*\!\Big(W^{(h)}_{s;x,(z_k)_{k\in H}}\sum_{m\ge 0}\sum_{K:H\subset K\subset \widehat{N}}\notag\\
&\quad\,\Big(W^{(\bar{n}-h-1,m)}_{s,y_1,y_2;x,(z_k)_{k\in \widehat{N}\backslash K};(z_k)_{k\in K\backslash H}}
\!+\!W^{(\bar{n}-h-1,m)}_{s,y_1,y_2;(z_k)_{k\in \widehat{N}\backslash K};x,(z_k)_{k\in K\backslash H}}\Big)\Big)\Big)[\sigma]\notag\\
&\!=\sum_{h=0}^{\bar{n}-1}\sum_{H\subset \widehat{N}}\int_{0}^t\!ds\!
\int_{\mathbb{R}}\!dx
\Big(\nu_{t-s}\!*\!\Big(W^{(h)}_{s;x,(z_k)_{k\in H}}W^{(\bar{n}-h-1,m)}_{s,y_1,y_2;x,(z_k)_{k\in \widehat{N}\backslash H}}\Big)\Big)[\sigma]\notag\\
&\!=W^{(\bar{n})}_{t,y_1,y_2;z_1,\dots,z_N}[\sigma],
\end{align}
which advances the induction.
\end{proof}

\begin{lemma}\label{lemma6}
\begin{align}
&\quad\,\int_{\mathbb{R}^{N+N'}}\!\!dz_1\cdots dz_N dz'_1\cdots dz'_{N'}\big|W^{(n,m)}_{t,y_1,y_2;z_1,\dots,z_{N};z'_1,\dots,z'_{N'}}[\sigma]\big|\le \notag\\
&\frac{t^n}{4^{n}}\frac{(2\sqrt{2\lambda})^{N+N'+2n}}{(\mathrm{Re}\,M^2)^{N'+\frac32 n -m}}
\frac{(N\!+\!m)!}{m!}\frac{(N'\!+\!2n\!-\!m\!-\!1)!}{(2n\!-\!m\!-\!1)!}B_{n,m}(C'^{N+m+1})_{y_1,y_2},
\end{align}
where ${(P\!+\!Q)!}/{Q!}$ can be interpreted as $\prod_{P'=1}^P (P'\!+\!Q)$ when $Q\!<\!0$.
In particular, $\big|W^{(n,m)}_{t,y_1,y_2}[\sigma]\big|\le {(2\lambda t)^n}(\mathrm{Re}\,M^2)^{-\frac32 n +m}
B_{n,m}(C'^{m+1})_{y_1,y_2}$. 
\end{lemma}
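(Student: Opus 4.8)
The plan is a bookkeeping induction on $n$, with the recursive definition of $W^{(n,m)}$ as the engine and Lemmas~\ref{lemma1},~\ref{lemma2},~\ref{lemma3} supplying, respectively, the pointwise resolvent bound $|G[\sigma]_{x,y}|\le C'_{x,y}$, the sum over subsets $K\subset\widehat N$, and the sum over the splitting index $h$. Two structural facts are used at every step. First, \emph{all} the bounds in sight are independent of $\sigma$, so each Gaussian integration $\nu_{t-s}*$ contributes only its total mass $1$ and may simply be dropped. Second, once the variables $(z'_k)_{k\in K}$ are integrated out, the factor $W^{(h)}_{s;x,(z'_k)_{k\in K}}$ is bounded by Lemma~\ref{lemma4} with loop point $z_0=x$, and that bound is uniform in $x$, so it pulls out of the $\int dx$.

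\emph{Base case $n=0$.} For $m\ge 1$ both sides vanish, since $W^{(0,m)}=0$ and $B_{0,m}=0$. For $m=0$ the only nonzero instance has $N'=0$, where $W^{(0,0)}_{t,y_1,y_2;z_1,\dots,z_N}=W^{(0)}_{t,y_1,y_2;z_1,\dots,z_N}$; inserting the explicit chain-of-$G$'s expression, bounding each $|G[\sigma]|$ by $C'$ (Lemma~\ref{lemma1}), integrating the $z$'s (which turns the chain into $(C'^{N+1})_{y_1,y_2}$) and counting the $N!$ terms of $S_N$ gives exactly $(2\sqrt{2\lambda})^N N!\,(C'^{N+1})_{y_1,y_2}$, which is the claimed bound (reading $(-1)!/(-1)!$ as the empty product $1$ and using $B_{0,0}=1$).

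\emph{Inductive step.} Fix $\bar n\ge 1$, assume the bound for all $n<\bar n$; the cases $m=0$ and $m>\bar n$ are again immediate, so take $1\le m\le\bar n$. In the defining recursion for $W^{(\bar n,m)}$ push the absolute value through $\nu_{t-s}*$, $\sum_h$, $\sum_{K}$, $\int_0^t ds$ and $\int dx$ (legitimate by Lemma~\ref{lemma1} and Fubini), bounding the first factor by Lemma~\ref{lemma4} (with $N$ replaced by $|K|$ and $n$ by $h$) and the second by the inductive hypothesis applied to $W^{(\bar n-h-1,m-1)}$ (with $z$-list $\{x,z_1,\dots,z_N\}$, $z'$-list $(z'_k)_{k\in\widehat N'\setminus K}$) or to $W^{(\bar n-h-1,m)}$ (with $z$-list $\{z_1,\dots,z_N\}$, $z'$-list $\{x\}\cup(z'_k)_{k\in\widehat N'\setminus K}$). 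A short check shows the powers of $4$, of $2\sqrt{2\lambda}$, of $\mathrm{Re}\,M^2$ and the factor $(C'^{N+m+1})_{y_1,y_2}$ combine to exactly those in the statement, and $\int_0^t ds$ produces $t^{\bar n}/\bar n$. The combinatorial core then has three moves. (i) Expose the coefficients required by Lemma~\ref{lemma3}: rewrite the $z$-factorial of the $W^{(\bar n-h-1,m-1)}$ term as $\tfrac{(N+m)!}{(m-1)!}=m\cdot\tfrac{(N+m)!}{m!}$, and the $z'$-factorial of the $W^{(\bar n-h-1,m)}$ term as $\tfrac{(N'-|K|+2(\bar n-h-1)-m)!}{(2(\bar n-h-1)-m-1)!}=(2\bar n-2h-m-2)\cdot\tfrac{(N'-|K|+2(\bar n-h-1)-m)!}{(2(\bar n-h-1)-m)!}$. (ii) In both terms the surviving $z'$-factorial is now $\tfrac{(N'-|K|+2(\bar n-h-1)-m)!}{(2(\bar n-h-1)-m)!}$; multiplying it by the Lemma~\ref{lemma4} factor $\tfrac{(|K|+2h)!}{(2h)!}$ and summing over $K$ gives, by Lemma~\ref{lemma2} with $n_1=2h$ and $n_2=2(\bar n-h-1)-m$, the $K$-free factor $\tfrac{(N'+2\bar n-m-1)!}{(2\bar n-m-1)!}$, alongside the common $z$-factorial $\tfrac{(N+m)!}{m!}$. (iii) The remaining $h$-sum is $\tfrac1{\bar n}\sum_{h=0}^{\bar n-1}A_h\big(m\,B_{\bar n-h-1,m-1}+(2\bar n-2h-m-2)\,B_{\bar n-h-1,m}\big)=B_{\bar n,m}$ by Lemma~\ref{lemma3}, the $1/\bar n$ being exactly the one produced by $\int_0^t ds$. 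Collecting the factors yields the asserted bound, and setting $N=N'=0$ (where $\tfrac{t^n}{4^n}(2\sqrt{2\lambda})^{2n}=(2\lambda t)^n$) yields the ``in particular'' clause.

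The main obstacle is this combinatorial matching: one must split the two factorials in just the right way so that Lemma~\ref{lemma2} applies uniformly to the ``$m-1$'' and ``$m$'' descendants \emph{and} so that the residual $h$-sum is precisely the Catalan-type recursion of Lemma~\ref{lemma3}. Everything analytic---well-definedness and the bound $|G[\sigma]|\le C'$ from Lemma~\ref{lemma1}, discarding the $\nu_{t-s}*$ integrations, pulling the $W^{(h)}$ factor out of $\int dx$, Fubini---is then routine.
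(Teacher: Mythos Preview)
Your proposal is correct and follows essentially the same route as the paper: induction on $n$, the $n=0$ base case via the explicit chain-of-$G$'s and Lemma~\ref{lemma1}, then in the inductive step bounding the $W^{(h)}$ factor by Lemma~\ref{lemma4}, the other factor by the inductive hypothesis, and closing with the identical combinatorial moves (your (i)--(iii) are exactly the paper's rewrite $\tfrac{(N+m)!}{(m-1)!}=m\cdot\tfrac{(N+m)!}{m!}$ etc., Lemma~\ref{lemma2} with $n_1=2h$, $n_2=2\bar n-2h-m-2$, and Lemma~\ref{lemma3}). Your presentation is in fact a bit more explicit about why the two $z'$-factorials can be made to coincide before the $K$-sum, which the paper leaves implicit.
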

\begin{proof}
We use induction on $n$. If $n\!=\!0$, we need only to prove the lemma for $m\!=\!0$ and $N'\!=\!0$:
\begin{align}
&\quad\,\int_{\mathbb{R}^{N}}\!\!dz_1\cdots dz_N\big|W^{(0,0)}_{t,y_1,y_2;z_1,\dots,z_N}[\sigma]\big|
\le(2\sqrt{2\lambda})^{N}\sum_{\tau\in S_{N}}\notag\\
&\quad\,\int_{\mathbb{R}^{N}}\!\!dz_1\cdots dz_N\big(\nu_{t}\!*\!\big|
G_{y_1,z_{\tau(1)}}G_{z_{\tau(1)},z_{\tau(2)}}\!\cdots G_{z_{\tau(N)},y_2}\big|\big)[\sigma]\notag\\
&\le(2\sqrt{2\lambda})^{N}\!\sum_{\tau\in S_{N}}\!\int_{\mathbb{R}^{N}}\!\!dz_1\cdots dz_N\,C'_{y_1,z_{\tau(1)}}C'_{z_{\tau(1)},z_{\tau(2)}}\!\cdots C'_{z_{\tau(N)},y_2}\notag\\
&=(2\sqrt{2\lambda})^{N}N!\,(C'^{N+1})_{y_1,y_2}.
\end{align}
Assuming it holds for $0\!\le\!n\!\le\!\bar{n}\!-\!1$, we consider the case $n\!=\!\bar{n}$ and
need only to prove the lemma for $1\!\le\! m \!\le\! \bar{n}$.
By Lemma \ref{lemma4} and the inductive assumption,
\begin{align}
&\quad\,\int_{\mathbb{R}^{N+N'}}\!\!dz_1\cdots dz_N dz'_1\cdots dz'_{N'}\big|W^{(\bar{n},m)}_{t,y_1,y_2;z_1,\dots,z_{N};z'_1,\dots,z'_{N'}}[\sigma]\big|\notag\\
&\le\sum_{h=0}^{\bar{n}-1}\sum_{K\subset \widehat{N}'}\int_{0}^t\!ds\!\int\!d\nu_{t-s}(\bar{\sigma})\!
\int_{\mathbb{R}^{N+N'+1}}\!\!dx dz_1\cdots dz_N dz'_1\cdots dz'_{N'}\notag\\
&\quad\;\Big|W^{(h)}_{s;x,(z'_k)_{k\in K}}[\sigma\!+\!\bar{\sigma}]\Big|\Big(\Big|W^{(\bar{n}-h-1,m-1)}_{s,y_1,y_2;x,z_1,\dots,z_{N};(z'_k)_{k\in \widehat{N}'\backslash K}}[\sigma\!+\!\bar{\sigma}]\Big|\notag\\
&\quad\;+\Big|W^{(\bar{n}-h-1,m)}_{s,y_1,y_2;z_1,\dots,z_{N};x,(z'_k)_{k\in \widehat{N}'\backslash K}}[\sigma\!+\!\bar{\sigma}]\Big|\Big)\notag\\
&\le\frac{1}{4^{\bar{n}}}\frac{(2\sqrt{2\lambda})^{N+N'+2\bar{n}}}{(\mathrm{Re}\,M^2)^{N'+\frac32 \bar{n} -m}}\frac{(N\!+\!m)!}{m!}(C'^{N+m+1})_{y_1,y_2}\!\int_{0}^t\!ds\, s^{\bar{n}-1}\notag\\
&\quad\;\sum_{h=0}^{\bar{n}-1}A_h\big(m B_{\bar{n}-h-1,m-1}
+(2\bar{n}\!-\!2h\!-\!m\!-\!2)B_{\bar{n}-h-1,m}\big)\notag\\
&\quad\;\sum_{K\subset \widehat{N}'}\frac{(|K|\!+\!2h)!}{(2h)!}\frac{(N'\!-\!|K|\!+\!2\bar{n}\!-\!2h\!-\!m\!-\!2)!}{(2\bar{n}\!-\!2h\!-\!m\!-\!2)!}
\end{align}
 Then using Lemma \ref{lemma2} and \ref{lemma3}, we continue with
\begin{align}
&=\frac{t^{\bar{n}}}{4^{\bar{n}}}\frac{(2\sqrt{2\lambda})^{N+N'+2\bar{n}}}{(\mathrm{Re}\,M^2)^{N'+\frac32 \bar{n} -m}}\frac{(N\!+\!m)!}{m!}\frac{(N'\!+\!2\bar{n}\!-\!m\!-\!1)!}{(2\bar{n}\!-\!m\!-\!1)!}(C'^{N+m+1})_{y_1,y_2}\notag\\
&\quad\;\frac{1}{\bar{n}}\sum_{h=0}^{\bar{n}-1}
A_h\big(m B_{\bar{n}-h-1,m-1}
+(2\bar{n}\!-\!2h\!-\!m\!-\!2)B_{\bar{n}-h-1,m}\big)\notag\\
&=\frac{t^{\bar{n}}}{4^{\bar{n}}}\frac{(2\sqrt{2\lambda})^{N+N'+2\bar{n}}}{(\mathrm{Re}\,M^2)^{N'+\frac32 \bar{n} -m}}\frac{(N\!+\!m)!}{m!}\frac{(N'\!+\!2\bar{n}\!-\!m\!-\!1)!}{(2\bar{n}\!-\!m\!-\!1)!}B_{\bar{n},m}(C'^{N+m+1})_{y_1,y_2},
\end{align}
which advances the induction.
\end{proof}

Combining Lemma \ref{lemma3}, \ref{lemma5} and \ref{lemma6}, we obtain that
\begin{align}
&\quad\;\sum_{n\ge 0}\big|W^{(n)}_{t,y_1,y_2}[\sigma]\big|\le\sum_{n\ge 0}\sum_{m\ge 0}\big|W^{(n,m)}_{t,y_1,y_2}[\sigma]\big|\notag\\
&\le \sum_{m\ge 0}\sum_{n\ge 0}{(2\lambda t)^n}(\mathrm{Re}\,M^2)^{-\frac32 n +m}
B_{n,m}(C'^{m+1})_{y_1,y_2}\notag\\
&=\sum_{m\ge 0}(\varepsilon_t\,\mathrm{Re}\,M^2)^m\!\int_{\mathbb{R}}\!\frac{dp}{2\pi}\frac{e^{i p(y_1-y_2)}}{(p^2\!+\!\mathrm{Re}\,M^2)^{m+1}}=\int_{\mathbb{R}}\!\frac{dp}{2\pi}\frac{e^{i p(y_1-y_2)}}{p^2\!+\!(1\!-\!\varepsilon_t)\mathrm{Re}\,M^2}\notag\\
&
=\frac{1}{2((1\!-\!\varepsilon_t)\mathrm{Re}\,M^2)^{1/2}}e^{-((1-\varepsilon_t)\mathrm{Re}\,M^2)^{1/2}|y_1-y_2|}
\end{align}
for $0\!\le\!\lambda t\!\le\!\frac18(\mathrm{Re}\,M^2)^{\frac32}$, where $\varepsilon_t\!=\!\frac12 \big(1\!-\!\sqrt{1\!-\!8\lambda t(\mathrm{Re}\,M^2)^{-3/2}}\,\big)$.
This completes the proof of Theorem \ref{theorem1}.

By Lemma \ref{lemma1}, $\big|G[\sigma]_{x,x}\big|\le\frac{1}{2(\mathrm{Re}\,M^2)^{1/2}}$. 
By Lemma \ref{lemma3} and \ref{lemma4}, 
\begin{align}
\sum_{n\ge 0}\big|W^{(n)}_{t;x}[\sigma]\big| \le \sum_{n\ge 0} \frac{t^n}{4^{n+1}}\frac{(2\sqrt{2\lambda})^{2n+1}}{(\mathrm{Re}\,M^2)^{\frac32 n+\frac12}} A_n \notag\\
\le \Big(\frac{\lambda}{2\,\mathrm{Re}\,M^2}\Big)^{\frac12}\frac{2}{1+\sqrt{1\!-\!8\lambda t(\mathrm{Re}\,M^2)^{-3/2}}}
\end{align}
for $0\!\le\!\lambda t\!\le\!\frac18(\mathrm{Re}\,M^2)^{\frac32}$.
Then the proof of Theorem \ref{theorem2} follows easily.

\section*{Acknowledgements}
The work is partially supported by Wu Wen-Tsun Key Laboratory of Mathematics.
The author thanks Professor Zheng Yin for valuable advices and suggestions.


\end{document}